\tikzstyle{edge} = [draw,->]
\spnewtheorem*{example*}{Example}{\itshape}{\rmfamily}
\newcommand{\pponly}[1]{}
\newcommand{\rronly}[1]{#1}
\newcommand{\BB}{\mathbb{B}}
\newcommand{\QQ}{\mathbb{Q}}
\newcommand{\instance}[1]{\hat{#1}}
\newcommand{\eqvclass}[2]{\bar{#1}^{#2}}
\newcommand{\calM}{\mathcal{M}}
\newcommand{\calP}{\mathcal{P}}
\newcommand{\sleq}{\!\leq\!}
\newcommand{\defn}{\stackrel{\triangle}{=}}
\newcommand{\abstr}[1]{#1^\sharp}
\newcommand{\ie}{\textit{i.e.}}
\newcommand{\eg}{\textit{e.g.}}
\newcommand{\mat}[1]{{\boldsymbol #1}}
\renewcommand{\vec}[1]{{\boldsymbol #1}}
\newcommand{\verr}{\mathit{e}} 
\newcommand{\vheat}{\mathit{h}} 
\newcommand{\vfan}{\mathit{f}}
\author{David Monniaux\inst{1} \and Peter Schrammel\inst{2}}
\institute{CNRS / VERIMAG \and University of Oxford}
\title{Speeding Up \\ Logico-Numerical Strategy
  Iteration\rronly{\\(extended version)}%
\thanks{The research leading to these results has received funding
    from the European Research Council under the European Union’s
    Seventh Framework Programme (FP7/2007–2013) / ERC grant agreement
    306595 \href{http://stator.imag.fr/}{``STATOR''}
    and the ARTEMIS Joint Undertaking under grant
    agreement number 295311 \href{http://vetess.eu/}{``VeTeSS''}}}
\begin{document}
\maketitle

\begin{abstract}
We introduce an efficient combination of polyhedral analysis and predicate
partitioning.
Template polyhedral analysis abstracts numerical variables inside a
program by one polyhedron per control location, with \emph{a priori}
fixed directions for the faces. The strongest inductive invariant in
such an abstract domain may be computed by upward strategy
iteration. If the transition relation includes disjunctions and
existential quantifiers\rronly{ (a succinct representation for an exponential
set of paths)}, this invariant can be computed by a combination of
strategy iteration and satisfiability modulo theory (SMT) solving.
Unfortunately, the above approaches lead to unacceptable space and
time costs if applied to a program whose control states have been
partitioned according to predicates. We therefore propose a
modification of the strategy iteration algorithm where the strategies
are stored succinctly, and the linear programs to be solved at each
iteration step are simplified according to an equivalence
relation. 
We have implemented the technique in a prototype tool and we
demonstrate on a series of examples that the approach performs
significantly better than previous strategy iteration techniques.
\begin{keywords}
Static analysis, abstract interpretation, strategy iteration, predicate abstraction
\end{keywords}
\end{abstract}

\section{Introduction}
Program verification for unbounded execution times generally relies on
finding inductive loop (or procedure) invariants. In the
\emph{abstract interpretation} approach, loop invariants are
automatically searched within a class known as an \emph{abstract
  domain}. When dealing with numerical variables, it is common to
search for invariants shaped as products of intervals (constraints $l
\sleq x \sleq u$ with the program variable $x$ and bounds $l,u$),
convex polyhedra (constraint system $\mat{A}\vec{x} \sleq \vec{c}$
with the matrix $\mat{A}$, the vector of program variables $\vec{x}$,
and the vector of bounds $\vec{c}$), or restricted classes of convex
polyhedra such as \emph{octagons} (\rronly{constraints }$\pm x_i \pm x_j \sleq c$%
\rronly{ with program variables $x_i,x_j$ and a bound $c$}). Intervals and octagons
are instances of \emph{template polyhedra}: polyhedra where $\mat{A}$
is fixed \emph{a priori}, whereas in the general polyhedral approach,
$\mat{A}$ is discovered.  The restriction to fixed $\mat{A}$ reduces
the problem to finding suitable values for a fixed number of bounds
$\vec{c}$, and even, for certain classes of transitions, minimizing
these bounds using \emph{strategy iteration}
\cite{Gawlitza_Monniaux_LMCS12} (also known as \emph{policy
  iteration}) or other techniques, thereby producing the least (or
\emph{strongest}) inductive invariant in the abstract domain. In
contrast, for unknown $\mat{A}$ the number of constraints (rows in
the $\mat{A}$ matrix) may grow quickly and is most often limited by
\emph{widening} heuristics~\cite{DBLP:journals/fmsd/HalbwachsPR97}.

\rronly{
One common weakness of all these abstract domains is that they
represent only \emph{convex} numerical properties; it is for instance
impossible to represent $|x| \geq 1$ where $|x|$ denotes the absolute
value of $x$. An analyzer running on \\[0.5ex]
{\footnotesize
\textbf{while}(...) \{\\
\hspace*{1em}... \\
\hspace*{1em}\textbf{if} (abs(x) $>$= 10) \{ \textbf{assert}(x != 0); ... \} \\
\} \\[0.5ex]
}
\noindent will flag a warning on the assertion, because at this point
the non-convex property $x \leq -10 \lor x \geq 10$ has been
abstracted away.  In order to relieve this weakness, some recent
approaches
\cite{Henry_Monniaux_Moy_SAS2012,Gawlitza_Monniaux_LMCS12,Monniaux_Gonnord_SAS11}
advocate convex abstractions only at a \emph{cut-set} of all program
locations --- a subset such that removing all points in this subset
cuts all cycles in the control-flow graph, e.g. all loop heads within
a structured program. In between such distinguished locations, program
executions are exactly represented (or at least represented more
faithfully) as solutions to satisfiability modulo theory (SMT)
formulas. This is equivalent to replacing the original control flow
graph by a multigraph whose vertices are the distinguished nodes and
the edges are the simple paths between the distinguished nodes.

Variants on this basic idea include combinations with invariant
inference with widenings
\cite{Monniaux_Gonnord_SAS11,Henry_Monniaux_Moy_SAS2012} and with
strategy iteration \cite{Gawlitza_Monniaux_LMCS12}. With such
approaches, the test \lstinline|if (abs(x) >= 10)| is interpreted as
the disjunction $x \leq -10 \lor x \geq 10$ and the code is analyzed
in both contexts $x \leq -10$ and $x \geq 10$. If $n$ tests are used
in succession, the number of cases to analyze may be $2^n$, but such
methods eschew exhaustive enumeration for as-needed consideration of
paths inside the code through SMT-solving.

There still remains a difficulty: what if disjunctive invariants are
needed at the distinguished nodes?  
Assume for example, that predicate abstraction is used to 
handle programs that contain
constructs other than linear arithmetic, for example,
pointers, dynamic data stuctures, non-linear arithmetic, etc.
}
\pponly{
In order to handle programs that contain
constructs other than linear arithmetic, for example,
pointers, dynamic data stuctures, non-linear arithmetic, etc,
one can split control nodes according to $n$ predicates, as in
\emph{predicate abstraction}. The number of control nodes may thus
grow exponentially in~$n$.
}
A similar situation arises in
\emph{reactive programs} for control applications, where a main loop
updates global variables at each iteration, including Booleans\rronly{ (or,
more generally, variables belonging to a finite enumerated type)}
encoding ``modes'' of operation.
Such a system has a single distinguished control point (the head of
the main loop), yet, one wants to distinguish
invariants according to the mode of operation of the system.  Assuming
modes are defined by the values of the $n$ Boolean variables, this can
be achieved by splitting the loop head into $2^n$ distinct
control nodes and computing one invariant for each of them.  

Should we apply a max-strategy iteration modulo SMT algorithm
\cite{Gawlitza_Monniaux_LMCS12} to these $2^n$ control nodes, its
running time would be in the worst case proportional to $2^{d2^n}$
where $d$ is the number of disjuncts in the formula
defining the semantics of the program.  Worse, it would construct
linear programming problems with $2^n\ell$ unknowns, where $\ell$ is
the number of rows in the $\mat{A}$ matrix.
While high worst case complexity is not necessarily an objection (many
algorithms behave in practice better than their worst case),
constructing exponentially-\emph{sized} linear programs at every iteration of
the algorithm is certainly too costly. We thus previously left this partitioning
variant as an open problem~\cite[\S9]{Gawlitza_Monniaux_LMCS12}\cite[\S3.5]{DBLP:conf/vmcai/SchrammelS13}.

\paragraph{Contributions.}
The main contribution of this paper is an algorithm that computes the
\emph{same} result as these prohibitively expensive methods\rronly{ proposed in
\cite{GS07,Gawlitza_Monniaux_LMCS12}}, but limits the costs by
computing on-the-fly \rronly{a form of }equivalence between constraint bounds
(of which there are exponentially many) and constructing problems
whose size depends on the number of these equivalence classes. These
equivalence classes, in intuitive terms, distinguish Boolean variables
\emph{with respect to the abstraction chosen} (the $\mat{A}$
matrix). This is a novel aspect that distinguishes our approach from
quotienting techniques (\eg \cite{DBLP:journals/scp/BouajjaniFHR92}).
In contrast to \cite{DBLP:conf/vmcai/SchrammelS13} that uses
approximations to scale, we aim at computing the strongest invariant.
Finally, we show the results of an experimental evaluation conducted
with our prototype implementation that demonstrate the largely improved
performance in comparison to previous strategy iteration techniques.

\section{Strategy Iteration Basics}
Let us now recall the framework of strategy iteration over template
linear constraint domains \cite{GS07}, reformulating it to the setting of 
programs with linear arithmetic and Boolean
variables. As explained above, Boolean variables may be introduced by 
predicate abstraction or by the encoding of the control flow as in
reactive systems. Similar to \cite{Gawlitza_Monniaux_LMCS12}, 
this allows us to represent an exponential number of paths as a
single compact transition formula.
We then explain why previous algorithms
\cite{GS07,Gawlitza_Monniaux_LMCS12} \pponly{are unacceptably
  inefficient on exponentially many control nodes}\rronly{have unacceptable complexity when instantiated
  on our exponential number of control nodes}.

\paragraph*{Notation.} 
\rronly{We shall often talk both about formal variables appearing in logical
formulas and about individual values they may take, particularly those
obtained as satisfying instances of formulas; if needed, we shall }
\pponly{We }distinguish values by denoting them $\instance{x},\instance{y}\dots$
as opposed to variables $x,y,\dots$.
Variables $x_1,\dots,x_n$ are denoted collectively as a
vector~$\vec{x}$.  \rronly{When discussing satisfaction of logical formulas,
we shall note $(\vec{x},\vec{y}) \models F$ to mean explicitly that
$\vec{x},\vec{y}$ are free variables of $F$ that should satisfy~$F$.}
\pponly{$(\vec{x},\vec{y}) \models F$ means that
$\vec{x},\vec{y}$ are free variables of formula $F$ that should satisfy~$F$.}

\subsection{Program Model and Abstract Domain}
We model a program as a transition system with $m$ rational variables
$\vec{x}\rronly{ = (x_1,\dots,x_m)}\in \QQ^m$ (the \emph{numeric state}) and
$n$ Boolean variables $\vec{b}\rronly{ = (b_1,\dots,b_n) }\in \BB^n$ (the
\emph{Boolean state}), where $\BB = \{0,1\}$.  Let $I =
\rronly{(b_1^0,\dots,b_n^0, x_1^0,\dots,x_m^0) =} (\vec{b}^0, \vec{x}^0)$ be
the initial state.  The transition relation $\tau$ is of the form
$\exists y_1,\dots,y_E \in \QQ,~\exists p_1,\dots,p_d \in \BB.~T$
where $T$ is a quantifier-free formula in negation normal form, whose
atoms are either propositional ($b_i$, $\neg b_i$, $p_i$, $\neg p_i$),
linear (in)equalities ($\sum \alpha_i x_i + \sum \alpha'_i x'_i + \sum
\beta_i y_i \bowtie c$, where the $\alpha_i$, $\alpha_i'$, $\beta_i$
and $c$ are rational constants) with $\bowtie~\in\{\leq,<,=\}$, and
$y_i$ variables to encode nondeterminism\pponly{, e.g. reactive inputs
  or linearization of non-linear arithmetic};%
\rronly{\footnote{This limitation to linear (in)equalities may be
    lifted using, \eg, \emph{linearization}
    techniques~\cite{Mine_PhD04}.  For integer values, simple
    transformations should be performed, \eg, $x < y$ to $x \leq y-1$.
    Floating-point operations may also be relaxed to nondeterministic
    real operations~\cite[\S4.5]{Monniaux_LMCS10}.}}  the free
variables of $\tau$ are
\pponly{$\vec{x},\vec{x}',\vec{b}$}\rronly{$x_1,\dots,x_m,\allowbreak
  b_1,\dots,b_n, \allowbreak x'_1,\dots,x'_m,\allowbreak
  b'_1,\dots,b'_n$} where primed (respectively, unprimed) variables
denote the state after (respectively, before) the
transition. Furthermore, we add $p_i$ variables to each disjunction
with non-propositional literals, \ie, $x\leq 3 \vee x\geq 6$ becomes $(p_i
\land x\leq 3) \lor (\neg p_i \land x\geq 6)$. This encoding is necessary to
uniquely identify each disjunct by a Boolean proposition and extract
it from a SAT model.  The free variables of $T$ are thus grouped into
$\vec{b},\vec{b}',\vec{x},\vec{x}',\vec{p},\vec{y}$ where
$(\vec{b},\vec{x})$ and $(\vec{b}',\vec{x}')$ define respectively the
departure and arrival states and $\vec{p},\vec{y}$ stand for
intermediate values and choices.

\begin{example*}
We consider the following running example (a variant of the classical
thermostat model):

\lstset{numbers=left, numberstyle=\tiny, stepnumber=1, numbersep=5pt}
{\footnotesize
\begin{lstlisting}
bool error = 0, heat_on = 1;
bool fan_on = read_button();
real t = 16;
while(1) {
  real te = read_external_temp();
  assume(14<=te && te<=19);
  fan_on = read_button() ? !fan_on : fan_on;
  if(!error && (t<15 || t>30)) error = 1;
  else if(heat_on && t>22) heat_on = 0;
  else if(!error && heat_on && t<=22) t = (15*t + te)/16 + 1;
  else if(!error && !heat_on && t<18) heat_on = 1; 
  else if(!heat_on && t>=18) t = (15*t + te)/16;
}
\end{lstlisting}}
This program has the following transition relation $T$ with $n=3$
Boolean variables $\vec{b}=(\verr,\vheat,\vfan)$ (short for (error,heat\_on,fan\_on)),
$m=1$ numerical variables $\vec{x}=(t)$, $d=3$ path choice variables
$\vec{p}=(p_0,p_1,p_2)$), $\vec{y}=(te)$, and initial states $\neg \verr \wedge
\vheat \wedge t=16$:

\smallskip
{\small
$
\hspace*{-3ex}\begin{array}{c@{}c@{}c@{}c@{}c}
  \neg p_0 \wedge p_1 \wedge p_2 
  &\wedge&
  \neg \verr \wedge \verr' \wedge (\vheat=\vheat')
  &\wedge&
  t>30 \wedge t' = t~\vee\\

  \neg p_0 \wedge p_1 \wedge \neg p_2 
  &\wedge&
  \neg \verr \wedge \verr' \wedge (\vheat=\vheat')
  &\wedge&
  t<15 \wedge t' = t~\vee \\

  p_0 \wedge p_1 \wedge p_2
  &\wedge&
  \vheat \wedge \neg \vheat' \wedge (\verr=\verr')
  &\wedge&
  22<t\leq 30 \wedge t' = t~\vee \\

  p_0 \wedge p_1 \wedge \neg p_2
  &\wedge&
  \neg \verr \wedge \vheat \wedge \neg \verr' \wedge  \vheat' 
  &\wedge&
  t\leq 22 \wedge 14\leq te\leq 19 \wedge t' =
  \frac{15t+te}{16}+1~\vee \\
 
  p_0 \wedge \neg p_1 \wedge p_2
  &\wedge&
  \neg \verr \wedge \neg \vheat \wedge \neg \verr' \wedge \vheat'   
  &\wedge&
  15\leq t<18 \wedge t' = t~\vee \\

  p_0 \wedge \neg p_1 \wedge \neg p_2
  &\wedge&
  \neg \vheat \wedge \neg \vheat' \wedge  (\verr=\verr')
  &\wedge&
  t\geq 18 \wedge 14\leq te\leq 19 \wedge t' = \frac{15t+te}{16} 
\end{array}
$
}

\smallskip
\noindent The disjuncts stem from lines 9 --13; line 9 produces two
path choices. 
\end{example*}

\paragraph*{Abstract Domain.}
Let $\mat{A}$ be a $\ell \times m$ rational matrix, with rows
$\mat{A}_1,\dots,\mat{A}_\ell$.%
\rronly{\footnote{One can also make $\mat{A}$ depend on $b_1,\dots,b_n$ so as
  to apply a non-uniform abstraction, adding minor complication to
  algorithms and proofs. We chose to describe uniform abstraction for
  the sake of brevity and
  clarity.}} 
An element~$\rho$ of the abstract domain $\abstr{D}$ is a function
$\BB^n \rightarrow \overline{\QQ}^\ell$ with $\overline{\QQ} = \QQ \cup \{
-\infty, +\infty \}$. $\rho(\vec{b}) =\vec{c}$ means that 
at a Boolean state $\vec{b}$ the vector of numerical variables $\vec{x}$ is such
that $\mat{A}\vec{x} \sleq \vec{c}$ coordinate-wise.
Moreover, we write $\rho(\vec{b}){=}-\vec{\infty}$ if 
any coordinate $c_i{=}-\infty$, meaning that the Boolean state $\vec{b}$ is
unreachable (because $\mat{A}\vec{x} \sleq \vec{c}$ is $\mathit{false}$). We note
$\gamma(\rho)$ the set of states $(\vec{b},\vec{x})$ verifying these
conditions.  $\QQ^\ell$ is ordered by
coordinate-wise $\leq$, inducing a point-wise ordering $\sqsubseteq$
on $\abstr{D}$. $\gamma$ is thus monotone w.r.t.
$\sqsubseteq$ and the inclusion ordering on sets of states\rronly{; note that
it is not injective in general}.  We denote by $\rho(i,\vec{b})$ the
$i$-th coordinate of $\rho(\vec{b})$.
$\rho$ is said to be an \emph{inductive invariant} if it contains the initial state ($\mat{A}\vec{x}^0 \leq \rho(\vec{b}^0)$) and it is stable by transitions:\\
\centerline{$\forall \vec{b},\vec{x},\vec{b}',\vec{x}'.~ (\vec{b},\vec{x}) \in \gamma(\rho) \land (\vec{b},\vec{x},\vec{b}',\vec{x}') \models \tau \implies (\vec{b}',\vec{x}') \in \gamma(\rho')$,}\rronly{\\
otherwise said
$\forall \vec{b},\vec{x},\vec{b}',\vec{x}'.~ \rho(\vec{b}) \neq \bot \land \mat{A}\vec{x} \leq \rho(\vec{b})\land (\vec{b},\vec{x},\vec{b}',\vec{x}') \models\tau$\\
\centerline{$\implies \rho(\vec{b}') \neq \bot \land \mat{A}\vec{x}' \leq \rho(\vec{b}')$.}}

The main contribution of this paper is an effective way to compute the
least inductive invariant $\rho$ in this abstract
domain\pponly{ w.r.t. $\sqsubseteq$}\rronly{ with
  respect to inclusion ordering}.

\subsection{Strategy Iteration}
\label{sec:original_strategy_iteration}
Recall that the original strategy iteration algorithm \cite{GS07}
applies to disjunctive systems of linear inequalities (of exponential
size in $d$) induced by the collecting semantics of the program over
template polyhedra (see Equ.~(\ref{equ:transformer}) below).
Previous work \cite{Gawlitza_Monniaux_LMCS12} improves the algorithm by keeping
 the system implicit, only extracting a linear size system at any
 given time using SMT solving.
Note that $\tau$, after replacing each free Boolean variable by a
Boolean constant, is equivalent to a disjunction of (at most) $2^d$
formulas of the form $\exists \vec{y}.C$ where $C$ is a
conjunction of non-strict linear inequalities, and $d$ is the number
of Boolean existential quantifiers in~$\tau$.
In both algorithms, a \emph{strategy}%
\rronly{\footnote{The word ``strategy'' (or ``policy'') arises from an
    analogy between the system of min-max + monotone affine linear
    equalities whose least solution yields the least invariant in the
    domain \cite{Gawlitza_Monniaux_LMCS12,GS07} and the system of
    min-max + barycentric inequalities whose least solution is the
    value of a two-player Markov game.}}  
selects one disjunct in $C$ for each template row $i'$.  
Hence, we can use these algorithms in our setting by selecting a
disjunct for each template row \emph{and} each Boolean valuation for
$(\vec{b'},\vec{p})$.  This motivates the following definition:

A \emph{strategy} associates with each Boolean state $\vec{b}'$
and each constraint index $1 \sleq i' \sleq \ell$ either the special
value $\bot$, meaning that $\vec{b}'$ is unreachable and denoted by
$\sigma(i',\vec{b}')=\pi(i',\vec{b}')=\bot$, or a pair
$\big(\sigma(i',\vec{b}'),\pi(i',\vec{b}') \big)$ where
$\sigma(i',\vec{b}') \in \BB^n$ is a Boolean state and
$\pi(i',\vec{b}') \in \BB^d$ gives ``path choices''.  Once $\pi \in
\BB^d$ is chosen, the result of substituting $T[\pi/\vec{p}]$ is a
conjunction of linear inequalities and a propositional formula (in
the variables $\vec{b},\vec{b}'$); let
$T_\pi$ denote the conjunction of these linear inequalities.\pponly{\\[-3ex]}%
\rronly{\footnote{This conjunction corresponds to a \emph{merge-simple
    statement} of \cite{Gawlitza_Monniaux_LMCS12} or to a \emph{path}
  of \cite{Henry_Monniaux_Moy_SAS2012,Monniaux_Gonnord_SAS11}.}}

\paragraph*{Algorithm.}
Let us now see how the algorithm iterates until the least inductive
invariant is reached.
The algorithm maintains, at iteration number $k$, a current strategy
$(\sigma_k,\pi_k)$.  Initially, the abstract value $\rho_0$ is $\bot$
everywhere save at the initial Boolean state $\vec{b}^0$, where
$\rho(\vec{b}^0)=\mat{A}\vec{x}^0$; $\sigma_0$ and $\pi_0$ are $\bot$
everywhere. For $k \geq 0$, the strategy yields $\rho_{k+1}$ as the
least fixed point $\mu_{\geq \rho_k} \Psi_{\pi}$ greater than
$\rho_k$, $\Psi_{\pi}$ being an order-continuous operator on the
lattice $(\BB^n \times \{ 1, \dots, \ell\}) \rightarrow
\overline{\QQ}$ defined as: \\[-2ex]
\begin{equation}\label{equ:transformer}
\Psi_{\pi}(\rho) \defn (i',\vec{b}') \mapsto
  \sup \left\{ \mat{A}_i \vec{x}' \mid \exists \vec{x},\vec{y}.~T_{\pi(i',\vec{b}')}
          \land (\mat{A}\vec{x} \leq \rho(\sigma(i',\vec{b}'))) \right\}
\end{equation}
We explain in \S\ref{sec:original_value} how to compute this fixed
point; let us now see how $\sigma_{k+1}$ and $\pi_{k+1}$ are obtained
from $\sigma_k$ and $\pi_k$, and how termination is
decided~\cite[\S6.5]{Gawlitza_Monniaux_LMCS12}.  Each iteration goes
as follows: for all Boolean states $\instance{\vec{b}'} \in\BB^n$ and
all $\instance{\vec{b}}$ with $\rho_k(\instance{\vec{b}})\neq-\vec{\infty}$:\pponly{\\[-3ex]}
\begin{enumerate}
\item construct formula
  $T[\instance{\vec{b}}/\vec{b},\instance{\vec{b}'}/\vec{b}']$, that
  is, $T$ where variables $\vec{b}$ and $\vec{b}'$ have been replaced
  by Boolean values $\instance{\vec{b}}$ and $\instance{\vec{b}}'$;
\item conjoin it with constraints $\mat{A}\vec{x} \leq
  \rho_k(\vec{b})$ and $\mat{A}_i\vec{x} > \rho_k(i,\vec{b}')$,
thus obtaining\pponly{\\[-4ex]}
\begin{equation}\label{for:invariant_violation}
  T[\instance{\vec{b}}/\vec{b},\instance{\vec{b}'}/\vec{b}'] \land \mat{A}\vec{x} \leq \rho_k(\instance{\vec{b}}) \land \mat{A}_i\vec{x}' > \rho_k(i,\instance{\vec{b}'})
\end{equation}
\item check whether this formula (in free variables
  $x_1,\dots,x_m,\allowbreak x'_1,\dots,x'_m,\allowbreak
  y_1,\dots,y_E, \allowbreak p_1,\dots,p_d$) is satisfiable;
\item if this formula is satisfiable, $\rho_k$ does not describe an
  inductive invariant: the satisfying instance describes a transition
  from a state $(\vec{b},\vec{x})$ to a state $(\vec{b}', \vec{x}')$
  such that $(\vec{b},\vec{x})$ lies within the invariant but
  $(\vec{b}',\vec{x}')$ does not; this solution yields a new strategy
  $\pi_{k+1}(i,\instance{\vec{b}'})=\instance{\vec{p}}$ and
  $\sigma_{k+1}(i,\instance{\vec{b}'})=\instance{\vec{b}}$, which
  \emph{improves} on the preceding
  one~\cite[\S6.3]{Gawlitza_Monniaux_LMCS12};
\item if $\pi_{k+1} (i,\instance{\vec{b}'})$ and
      $\sigma_{k+1} (i,\instance{\vec{b}'})$ are not set by the preceding
      step, leave them to the their previous values $\pi_k(i,\instance{\vec{b}'})$ and $\sigma_k(i,\instance{\vec{b}'})$; if none have been
      updated, this means $\rho_k$ is the least inductive invariant, thus
      \textbf{exit}; 
\item otherwise, compute $\rho_{k+1} = \mu_{\geq \rho_k} \Psi_{\pi_{k+1}}$ and continue iterating.
\end{enumerate}
The main loop of this algorithm enumerates each of the $2^{(n+d)\ell 2^n}$
strategies at most once.  Remark the important improvement condition:
at every iteration but the last, $\Psi_\pi(\rho) > \rho$. Since there
is a finite number of strategies that may deem $\rho$
non-inductive and each of them is chosen at most once, we are
guaranteed to terminate with the least fixed point (without using any
widening) within a finite number of steps.

\begin{example*}
Let us analyze our running example using the box template $(t,-t)^T$.
Assume the current abstract value\footnote{For better readability, we write,
for example, $\bar{\verr}\vheat\vfan$ for
the value $(0,1,1)$ of $(\verr,\vheat,\vfan)$.} 
$\rho_k(i,\bar{\verr}\vheat\vfan)=16$ for $i\in\{1,2\}$.
To compute an improved strategy $\sigma_{k+1},\pi_{k+1}$, we have to check all values of
$(\instance{\vec{b}},\instance{\vec{b}'})$, \eg, 
$(\bar{\verr}\vheat\vfan,\bar{\verr}'\vheat'\vfan')$:
 instantiating Equ.~\ref{for:invariant_violation} with these values
 (with $T$ from our running example and $i=1$) gives\\[0.6ex]
\centerline{\small
$\begin{array}{c}
  \big(p_0 \wedge p_1 \wedge \neg p_2
\wedge
  t\leq 22 \wedge 14\leq te\leq 19 \wedge (t' = \frac{15t+te}{16}+1)\big)
\wedge 
\big(t=16\big) 
\wedge
\big(t'>16\big)
\end{array}
$}\\[0.6ex]
which is satisfied, for instance, by the model
$(\instance{\vec{x}},\instance{\vec{x}}',\instance{\vec{p}})=(16,17,(1,1,0))$.  Hence, we
update the strategy by setting $\sigma_{k+1}(1,\bar{\verr}\vheat\vfan)=\bar{\verr}\vheat\vfan$
and \\ $\pi_{k+1}(1,\bar{\verr}\vheat\vfan)=(1,1,0)$,
which induces\\[0.6ex]
\centerline{\small $
\begin{array}{c}
T_{\pi_{k+1}(1,\bar{\verr}\vheat\vfan)} = (t\leq 22 \wedge 14\leq
te\leq 19 \wedge (t' = \frac{15t+te}{16}+1))
\end{array}.
$ }

After having checked all $(\instance{\vec{b}},\instance{\vec{b}'})$,
we can compute the strategy value, \ie, 
the fixed point of $\Psi_{\pi_{k+1}}$, which updates
$\rho_{k+1}(1,\bar{\verr}\vheat\vfan)$ to
$\frac{365}{16}$ in this case.\footnote{By maximizing $t$ for any $te$, we get
  $\frac{15\cdot 22+19}{16}+1=\frac{365}{16}$.}  The way this is
computed is explained in the next section.
\end{example*}

\subsection{Computing the Strategy Value}
\label{sec:original_value}
We recall now how to compute the strategy fixed point $\mu_{\geq \rho}
\Psi_{\pi}$~\cite[\S6.4]{Gawlitza_Monniaux_LMCS12}, under the
condition that $\Psi_\pi(\rho) \geq \rho$ (which is always the case,
because of the way $\pi$ is chosen).

The first step is to identify the Boolean states $\vec{b}$
``abstractly unreachable'': such $\vec{b}$ form the least set $Z$
containing all $\vec{b} \neq \vec{b}^0$ such that
$\pi(i,\vec{b})=\bot$ and stable by: if $\vec{b}' \neq \vec{b}^0$ is
such that $\sigma(i,\vec{b}') \in Z$ then $\vec{b}' \in Z$; for all
$\vec{b} \in Z$, set $\rho(\vec{b}):=-\vec{\infty}$.

Construct a system of linear inequalities in the unknowns
$v_{i,\vec{b}}$ for $\vec{b} \in \BB^n$ and $1 \sleq i \sleq \ell$, plus
fresh variables: for all $\vec{b}' \notin Z$, for all $1 \leq i' \leq
\ell$ such that $\rho(i',\vec{b}') < +\infty$, add the inequalities\\
\hspace*{1em}$
\begin{array}{ll@{\hspace{1.5em}}l}
\bullet & \mat{A}_j\vec{x} \leq v_{j,\sigma(i',\vec{b}')}\text{ for all }1 \leq
j \leq \ell & \text{\small(``in departure state invariant'')} \\
\bullet & \mat{A}_{i'}\vec{x}' \geq v_{i',\vec{b}'} & 
\text{\small(``bounded by arrival state'')}\\
\bullet & \text{those from the conjunction }T_{\pi(i',\vec{b}')} &
\text{\small(``follows the transition relation'')}
\end{array}$\\
where variables $\vec{x}$ and $\vec{y}$ have been replaced
by fresh variables (each different $i,\vec{b}'$ has its own set of
fresh replacements).  $\rho(i,\vec{b}')$ is obtained by linear
programming as the maximum of $v_{i',\vec{b}'}$ satisfying this
system.  This linear program has solutions, otherwise the strategy
$\sigma,\pi$ would not have been chosen; if it has no \emph{optimal} solution
it means that $\rho(i',\vec{b}')=+\infty$.

Note that these $O(2^n \ell)$ linear programs have \pponly{$O(2^n \ell)$}%
\rronly{$O((2\ell+E)2^n)$} variables and a system of inequalities of size
$O(2^n |T|)$ where $|T|$ is the size of formula $T$. It is in fact
possible to replace these $O(2^n \ell)$ linear programs by two linear
programs of size \pponly{$O(2^n \ell)$}\rronly{$O((2\ell+E)2^n)$}: first,
one using the $\infty$-abstraction (see
\cite[\S8,9]{Gawlitza:2011:SSR:1961204.1961207}) to obtain which of
the $v_{i',\vec{b}'}$ go to $+\infty$, then another for maximizing
$\sum v_{i',\vec{b}'}$ restricted to the $v_{i',\vec{b}'}$ found not
to be $+\infty$ by the $\infty$-abstraction.

\section{Our Algorithm}
\label{sec:our_algorithm}
Notice three difficulties in the preceding algorithm: there are,
\emph{a priori}, $2^{2n}$ SMT-solving tests to be performed at each
iteration; the linear programs have exponential size; and there are at
most $2^{(n+d)\ell 2^n}$ strategies, thus a doubly exponential bound on the
number of iterations.
In intuitive terms, the first two difficulties stem from the explicit
expansion of the exponential set of Boolean states, despite the
implicit representation of the exponential set of execution paths
between any two control (Boolean) states $\vec{b}$ and $\vec{b}'$, a
weakness that we shall now remedy.

\subsection{Strategy Improvement Step}
\label{sec:new_strategy_improvement}

The first difficulty is the easiest to solve: the $2^{2n}$ SMT-tests,
one for each pair $(\vec{b},\vec{b} ')$ of control states, can be
folded into one single test where the $\vec{b}$ and $\vec{b}'$ also
are unknowns to be solved for.

Note that the structure of 
$\rho$, $\BB^n \rightarrow \overline{\QQ}^\ell$,
can be viewed as 
$\{1,\ldots,\ell\}\rightarrow (\BB^n \rightarrow \overline{\QQ})$.
Hence, we need not store a $2^n \times \ell$ array of rationals (or
infinities), but we can implement it efficiently as an array (of size
$\ell$) of \textsc{Mtbdd}s \cite{DBLP:journals/tc/Bryant86} with the bounds $c_{i,j}$ in the leaves.
Assume for a given template row $i$, we have $s_i$ different bounds
$c_{i,j}$, and denote $\phi_{i,j}$ the propositional formula
describing the set of Boolean states that map to bound $c_{i,j}$.
Then, observe that $\phi_{i,j}$ for $1\sleq j\sleq s_i$ form a
partition of $\BB^n$\rronly{ (that is, $\bigvee_{j=1}^{s_i} \phi_{i,j}$ is a tautology and each $\phi_{i,j} \land \phi_{i,k}$, $j \neq k$, is unsatisfiable)}.
We use the notation $\rho(i) = \{\phi_{i,1} \rightarrow
c_{i,1},\ldots,\phi_{i,s_i} \rightarrow c_{i,s_i}\}$ to represent an
\textsc{Mtbdd}, and $\rho(i,\vec{b}) = c_{i,j}$ to obtain the bound
$c_{i,j}$ for state $\vec{b}$ for template row $i$.\pponly{\\[-4ex]}

\paragraph{Strategy improvement condition.}
In Equ.~\ref{for:invariant_violation},
one may replace $\mat{A}\vec{x} \leq \rho(\vec{b})$ and $\mat{A}_i
\vec{x} > \rho(i',\vec{b} ')$ respectively by $\psi_1$ and $\psi_2$:
\pponly{\vspace*{-1ex}}
\begin{eqnarray}
\psi_1 \defn \bigwedge_i \bigvee_{j=1}^{s_i} \phi_{i,j}(\vec{b})
 \land \mat{A}_i\vec{x} \leq c_{i,j} \label{for:invariant_violation2a}\\[-1ex]
\psi_2 \defn \bigvee_{j=1}^{s_i} \phi_{i',j}(\vec{b}') \land \mat{A}_i\vec{x}' = c_{i',j} + \Delta \land \Delta > 0\label{for:invariant_violation2b}
\end{eqnarray}
\pponly{\vspace*{-3ex}}

Remark that $\psi =_{\mathit{def}} \psi_1 \land T \land \psi_2$ is satisfiable iff
there is a transition from $(\vec{b},\vec{x})$ inside the
invariant defined by $\rho$ to $(\vec{b}',\vec{x}')$ outside of it.
\rronly{
The same applies if we replace $T$ in $\psi$ by a \emph{slicing} or
\emph{cone of influence} $T_i$ of $T$ with respect to the value of
$\mat{A}_i\vec{x}$, that is, a formula $T_i$ such that $(\exists p_1,
\dots, p_d \in \BB, ~ \exists y_1, \dots, y_E \in \QQ.~T) \land
\mat{A}'_i \vec{x}' \geq v$ and $(\exists p_1, \dots, p_d \in \BB, ~
\exists y_1, \dots, y_E \in \QQ.~T_i) \land \mat{A}'_i \vec{x}' \geq
v$ are equivalent (w.r.t $\vec{b},\vec{x},\vec{b}',v$).  When $T$ is
compiled from a program, such a $T_i$ may be obtained using program
slicing.}
The strategy iteration algorithm progresses regardless of $\Delta$ as
long as $\Delta>0$. 
\rronly{It is however likely that maximizing $\Delta$
leads to faster convergence \cite[p.~26]{Gawlitza_Monniaux_LMCS12},
because there is no backtracking in max-strategy iteration and hence a
\emph{locally optimal}, \ie, greedy, strategy cannot be
disadvantageous.  Maximizing $\Delta$ may be performed by
\emph{optimization modulo theory}
techniques~\cite{DBLP:conf/sat/NieuwenhuisO06,DBLP:conf/cade/SebastianiT12,LAK+14}.}

Obtaining a solution
$\vec{b},\vec{x},\vec{b}',\vec{x}',\vec{y},\vec{p} \models \psi$
enables us to improve the strategy by setting
$\sigma(i,\vec{b}'):=\vec{b}$ and $\pi(i,\vec{b}'):=\vec{p}$, as in
\S\ref{sec:original_strategy_iteration}.

\begin{example*}
Let us assume the following current abstract value in the analysis of
our running example:\\[0.6ex]
\centerline{\small $
\begin{array}{ll}
\rho(1)=\{\neg\verr\wedge\vheat \rightarrow 16, &\verr \vee \neg\vheat\rightarrow -\infty\}\\
\rho(2)=\{\neg\verr\wedge\vheat \rightarrow -16, &\verr \vee
\neg\vheat\rightarrow -\infty\}
\end{array}
$}

\noindent We build Equ.~\ref{for:invariant_violation} using 
Equs.~\ref{for:invariant_violation2a} and
\ref{for:invariant_violation2b}:\\[0.6ex]
\centerline{\small $
\psi = T\wedge \left(\begin{array}{lcl}
\big(
\neg\verr\wedge\vheat \wedge t\leq 16
&\vee& (\verr \vee \neg\vheat)\wedge t\leq-\infty
\big) \wedge\\
\big(
\neg\verr\wedge\vheat \wedge -t\leq -16 &\vee&
(\verr \vee \neg\vheat)\wedge -t\leq-\infty
\big) \wedge\\
\big(
\neg\verr'\wedge\vheat' \wedge (t'=16+\Delta) &\vee& 
(\verr' \vee \neg\vheat')\wedge (-t'=-\infty+\Delta)\big) 
\end{array}\right) \wedge
\Delta>0
$}

\noindent This formula is satisfied, \eg, by the model
$(\instance{\vec{b}},\instance{\vec{b}'},\instance{\vec{x}},\instance{\vec{x}}',\instance{\vec{p}})=
(\bar{\verr}\vheat\vfan,\bar{\verr}\vheat\vfan,$ $16,17,(1,1,0))$.
Hence, we update
$\sigma(1,\bar{\verr}\vheat\vfan):=\bar{\verr}\vheat\vfan$ and
$\pi(1,\bar{\verr}\vheat\vfan):=(1,1,0)$.  We have to repeat this
check excluding the above solution to find other models, \eg,
$(\bar{\verr}\vheat\bar{\vfan},\bar{\verr}\vheat\vfan,16,17,(1,1,0))$.
\end{example*}

Improving the strategy this way would however be costly, since we
would have to do it one $\instance{\vec{b}}'$ at a time (by naive model
enumeration). 

\paragraph{Model generalization.}
There is however a better way by \emph{generalizing} from an obtained
model to a set of $\instance{\vec{b}}'$ that can be updated at once:
Notice now that, fixing $\instance{\vec{x}}$ and $\instance{\vec{y}}$
arising from a solution,
$\psi[\instance{\vec{x}}/\vec{x},\instance{\vec{y}}/\vec{y}]$ becomes
a purely propositional formula, whose models also yield suitable
solutions for $\vec{b},\vec{b}',\vec{p}$.
Fix $\instance{\vec{b}}$
and $\instance{\vec{p}}$ from a solution, then the free variables are
now only the $\vec{b}'$; then for any solution $\instance{\vec{b}'}$
of
$\psi[\instance{\vec{x}}/\vec{x},\instance{\vec{y}}/\vec{y},\instance{\vec{b}}/\vec{b},\instance{\vec{p}}/\vec{p}]$,
we can set $\pi(\instance{\vec{b}'},i):=\instance{\vec{p}}$ and
$\sigma(\instance{\vec{b}'},i):=\instance{\vec{b}}$.
We can thus improve strategies for whole sets of $\instance{\vec{b}}'$ at once in nondeterministic systems.

\begin{algorithm}[tb]
\caption{\textsc{Improve}: Selecting the strategy improvement
\label{algo:improve}}
\begin{algorithmic}[1]
\STATE{$\mathit{stable} := \mathit{true}$}
\FOR{$i' \in \{ 1, \dots, \ell\}$}
  \STATE{$U := \mathit{false}$  \hspace{2em} // $U$ defines the set of $\vec{b}'$ such that $\pi(i',\vec{b}')$ has
been updated.}
  \WHILE{\hfill$\neg U \land \left(\bigwedge_i \bigvee_{j=1}^{s_i} \phi_{i,j}(\vec{b}) \land \mat{A}_i\vec{x} \leq c_{i,j}\right) 
    \land T_{i'}\land
    \left(\bigvee_{j=1}^{s_i} \phi_{i,j}(\vec{b}') \land \mat{A}_i\vec{x}' = c_{i,j} + \Delta\right) \land \Delta > 0$ is satisfiable}
    \STATE{$\langle \instance{\vec{b}},\instance{\vec{x}},\instance{\vec{b}'},\instance{\vec{x}'},\instance{\vec{p}},\instance{\vec{y}} \rangle :=$ a model of the above formula \rronly{(optionally of max. $\Delta$)}}
    \STATE{$F := T_i[\instance{\vec{x}}/\vec{x},\instance{\vec{y}}/\vec{y}] \land \neg U$}
    \STATE{$\mathit{stable} := \mathit{false}$}
    \WHILE{$F$ is satisfiable}
      \STATE{$\langle \instance{\vec{b}_1}, \instance{\vec{b}'_1}, \instance{\vec{p}_1} \rangle :=$ a model of $F$}
      \STATE{$G := F[\instance{\vec{b}_1}/\vec{b},\instance{\vec{p}_1}/\vec{p}]$}
      \STATE{$F := F \land \neg G$}
      \STATE{$\pi[i',G] := \instance{\vec{p}}$ \hspace{2em} // $\pi[i',G] := \instance{\vec{p}}$ means ``in the mapping $\vec{b}' \mapsto \pi(i',\vec{b}')$,}
      \STATE{$\sigma[i' ,G] := \instance{\vec{b}}$ \hspace{2.04em} // \; replace all images of $\vec{b}'$ satisfying
formula $G$ }
      \STATE{$U := U \lor G$ \hspace{2.08em} // \; by $\instance{\vec{p}}$'' (respectively for~$\sigma$).}
    \ENDWHILE
  \ENDWHILE
\ENDFOR
\end{algorithmic}
\end{algorithm}
\begin{algorithm}[tb]
\caption{\textsc{Iterate}: Main strategy iteration algorithm
\label{algo:iterate}}
\begin{algorithmic}
\FOR{$i \in \{ 1, \dots, \ell\}$}
  \STATE{$\phi_{1,i} := (\vec{b}=\vec{b}^0)$;\quad
         $c_{1,i} := \mat{A}_i \vec{x}^0$;\quad 
         $\phi_{2,i} := (\vec{b} \neq \vec{b}^0)$;\quad
         $c_{2,i} := -\infty$}
\ENDFOR
\STATE{$\mathit{stable} := \mathit{false}$}
\WHILE{$\neg \mathit{stable}$}
  \STATE{\textsc{Improve}}
  \IF{$\neg \mathit{stable}$}
     \STATE{\textsc{Compute-Strategy-Value} $\quad$ (see \S\ref{sec:new_strategy_value})}
  \ENDIF
\ENDWHILE
\end{algorithmic}
\end{algorithm}

Our strategy improvement algorithm (procedure \textsc{Improve},
Alg.~\ref{algo:improve}) thus proceeds as follows: it maintains a set
$U$ of ``already improved'' values of $\vec{b}'$, and requests
$(\vec{b},\vec{b}',\vec{p})$ by SMT-solving as described above, with
the additional constraint that $\vec{b}' \notin U$; if no such
solution is found, it terminates, having done all improvements,
otherwise it generalizes $\vec{b}'$ to a whole set of solutions as
described above, and improves the strategy for all
these~$\vec{b}'$. The strategy $\pi,\sigma$ and the set $U$ are stored
in \textsc{Bdd}s.

\begin{example*}
Let us assume we have the current abstract value\\
\centerline{$
\begin{array}{ll}
\rho(1)=\{\neg\verr\wedge\vheat \rightarrow \frac{365}{16}, &\verr\vee \neg\vheat\rightarrow -\infty\}\\
\rho(2)=\{\neg\verr\wedge\vheat \rightarrow -16, &\verr\vee \neg\vheat\rightarrow -\infty\}.
\end{array}
$}
Moreover, assume that we have obtained the model of
$\psi$: 
$(\instance{\vec{b}},\instance{\vec{b}'},\instance{\vec{x}},\instance{\vec{x}}',\instance{\vec{p}})=
(\bar{\verr}\vheat\vfan,\bar{\verr}\bar{\vheat}\vfan,$ $\frac{365}{16},\frac{365}{16},(1,1,1))$.
Substituting the values of this solution for $\vec{x}$ and $\vec{x}'$ in formula
$\psi$, we get 
$F=p_0\wedge p_1\wedge p_2\wedge \neg\verr \wedge \vheat\wedge
\neg\verr' \wedge \neg\vheat'$.  
Now, we substitute the above values for $\vec{b}$ and $\vec{p}$ in $F$,
which gives us $G=\neg\verr' \wedge \neg\vheat'$. We update the strategy
$\sigma,\pi$ for
the whole set of states satisfying $G$, \ie,
$\{\bar{\verr}\bar{\vheat}\vfan,\bar{\verr}\bar{\vheat}\bar{\vfan}\}$
at once, and we add $G$ to $U$.
 Then we ask the SAT solver again for a model of the formula
$F\wedge\neg G$, which is unsatisfiable in this example.
We continue enumerating the solutions of $\psi$, but this time 
excluding $U$, \ie, we call the SMT solver with $\psi\wedge\neg U$, 
which is unsatisfiable in our example. Hence, we have completed 
strategy improvement for the first template row.
For row 2, we proceed similarly and obtain the same strategy update.
The associated strategy value computation yields the abstract value
$\rho$:\\[0.6ex]
\centerline{$
\begin{array}{lll}
\rho(1)=\{\neg\verr \rightarrow \frac{365}{16}, &&\verr\rightarrow -\infty\}\\
\rho(2)=\{\neg\verr\wedge\vheat \rightarrow -16, &\neg\verr\wedge \neg\vheat \rightarrow -22, &\verr\rightarrow -\infty\}.
\end{array}
$}
\end{example*}
\pponly{\vspace*{-4ex}}
\rronly{
\begin{lemma}\label{lemma:improve}
\textsc{Improve} terminates in at most exponential time. At the end,
$\mathit{stable}$ is false if and only if the strategy needed updating
(otherwise said, $\gamma(\rho_k)$ was not an inductive invariant), in
which case $\sigma,\pi$ contain the next strategy
$\sigma_{k+1},\pi_{k+1}$.
\end{lemma}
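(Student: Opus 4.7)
The plan is to split the argument into termination and correctness parts, treating the $\mathit{stable}$ flag and the strategy update separately.

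\emph{Termination.} The outer \texttt{for} loop runs $\ell$ times. For the middle \texttt{while} loop I would argue that every iteration strictly enlarges $U$: since the guard contains $\neg U$, the extracted $\instance{\vec{b}'}$ lies outside the current $U$, and the inner loop then appends at least one formula $G \ni \instance{\vec{b}'_1}$, which contains a previously un-covered $\vec{b}'$-assignment. Because $U$ denotes a subset of $\BB^n$, the middle loop is bounded by $2^n$ iterations. The inner loop terminates similarly: each pass conjoins $\neg G$ into $F$, where $G$ contains a $\vec{b}'$-assignment that just satisfied $F$, so the $\vec{b}'$-projection of the models of $F$ strictly shrinks, bounding the inner loop by $2^n$ as well. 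Each SMT query is at most exponential, giving an overall at-most-exponential running time.

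\emph{Correctness of $\mathit{stable}$.} The flag is flipped to false only on line~7, which is reached iff the outer guard (line~4) is satisfied for some $i'$. By the semantics of $\psi_1 \wedge T \wedge \psi_2$ established via Equs.~\ref{for:invariant_violation2a}--\ref{for:invariant_violation2b}, this guard is satisfiable precisely when some transition leads from a state $(\vec{b},\vec{x}) \in \gamma(\rho_k)$ to $(\vec{b}',\vec{x}')$ with $\mat{A}_{i'}\vec{x}' > \rho_k(i',\vec{b}')$, i.e.\ $\gamma(\rho_k)$ is not inductive. Quantifying over $i'$ yields the claimed biconditional.

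\emph{Correctness of $(\sigma,\pi)$.} When an update $\sigma[i',G] := \instance{\vec{b}}$, $\pi[i',G] := \instance{\vec{p}}$ fires, I would verify that for every $\vec{b}'_\ast \models G$ the pair $(\instance{\vec{b}}, \instance{\vec{p}})$ is a strategy choice at $(i',\vec{b}'_\ast)$ that the original algorithm (\S\ref{sec:original_strategy_iteration}, step~4) could legitimately pick, by extracting the matching $\vec{x}'_\ast$ from a model of $G$ and combining it with $\instance{\vec{x}}, \instance{\vec{y}}$ into a full witness of $\psi$. Conversely, for any $\vec{b}'_\ast$ not in $U$ at termination, the outer guard conjoined with $\vec{b}' = \vec{b}'_\ast$ is unsatisfiable, so no violating transition exists at $(i',\vec{b}'_\ast)$ and step~5 of the original algorithm dictates keeping the previous value --- which is exactly what \textsc{Improve} does by leaving $\pi(i',\vec{b}'_\ast), \sigma(i',\vec{b}'_\ast)$ untouched.

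The main obstacle is justifying the generalization step: one must check (i) that for every $\vec{b}'_\ast \models G$ there exists a compatible $\vec{x}'_\ast$ completing the witness of $\psi$, which relies on the fact that $F$ substitutes only $\vec{x}$ and $\vec{y}$, leaving $\vec{x}'$ existentially free to match each new $\vec{b}'$; and (ii) that the middle \texttt{while} cannot exit while a violating $\vec{b}'$ outside $U$ still exists, which follows because such a $\vec{b}'$ would keep the guard satisfiable. Once both are established the remaining arguments are essentially bookkeeping.
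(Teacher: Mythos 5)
Your termination argument coincides with the paper's own (very terse) proof: each pass of the middle loop adds at least one fresh $\vec{b}'$-valuation to $U$ (the guard contains $\neg U$), each pass of the inner loop removes at least one $\vec{b}'$-valuation from the models of $F$ (since $\instance{\vec{b}'_1} \models G$), and there are only $2^n$ of them. Your characterization of $\mathit{stable}$ via satisfiability of $\psi_1 \land T_{i'} \land \psi_2$ also matches the remark preceding the lemma. So far, same route as the paper.

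The gap is in your point (i), the soundness of the generalization, which you correctly identify as the main obstacle but do not actually discharge. You justify it by saying that $F$ substitutes only $\instance{\vec{x}}$ and $\instance{\vec{y}}$, ``leaving $\vec{x}'$ existentially free to match each new $\vec{b}'$.'' That at best produces a witness of the transition relation $T_{i'}$; it does not produce a witness of $\psi$. A legitimate update at $(i',\vec{b}'_\ast)$ in the sense of step~4 of \S\ref{sec:original_strategy_iteration} requires a satisfying instance of Equ.~\ref{for:invariant_violation}, in particular $\mat{A}_{i'}\vec{x}'_\ast > \rho_k(i',\vec{b}'_\ast)$ --- and this bound \emph{varies with} $\vec{b}'_\ast$. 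A state $\vec{b}'_\ast \models G$ may carry a bound $c_{i',j} \geq \mat{A}_{i'}\vec{x}'_\ast$ for every $\vec{x}'_\ast$ reachable from $(\instance{\vec{b}},\instance{\vec{x}})$ along $\instance{\vec{p}}$, in which case Equ.~\ref{for:invariant_violation} at $(i',\vec{b}'_\ast)$ is unsatisfiable and the reference algorithm would have \emph{kept} the old entry; your $F$, containing neither the residue of $\psi_1$ nor that of $\psi_2$, cannot exclude such $\vec{b}'_\ast$, so the claim that $\sigma,\pi$ equal a next strategy $\sigma_{k+1},\pi_{k+1}$ does not follow. The sound version of the generalization (the one described in the ``Model generalization'' paragraph and in the worked example) fixes $\instance{\vec{x}'}$ as well and takes the purely propositional residue of the \emph{whole} formula $\psi = \psi_1 \land T_{i'} \land \psi_2$: the residue of $\psi_2$ then restricts $\vec{b}'$ exactly to those states whose current bound is strictly below $\mat{A}_{i'}\instance{\vec{x}'}$, and the residue of $\psi_1$ plays the analogous role for the departure states. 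With that correction your remaining steps (the converse direction for states left outside $U$, and the bookkeeping) go through.
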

\begin{proof}
Each iteration of the outer (resp. inner) loop removes at least one
solution for $\vec{b}'$ from $\neg U$ (resp.~$F$), and there are $2^n$
of them. The updates to $\pi$ and $\sigma$ have been explained in the
preceding paragraphs.
\end{proof}}
\begin{theorem}
\textsc{Iterate} (Alg.~\ref{algo:iterate}) terminates in at most
$2^{(n+d)m2^n}$ iterations, with the final $\rho$ being equal to that
computed by the algorithm of \S\ref{sec:original_strategy_iteration},
yielding the least inductive invariant in the domain.%
\pponly{\footnote{
Due to space limitations, we refer to the extended version
\cite{MS14a} for the proofs.}}
\end{theorem}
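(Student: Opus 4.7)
The plan is to prove the theorem by induction on the iteration counter $k$, maintaining as inductive hypothesis that the succinct triple $(\sigma_k,\pi_k,\rho_k)$ produced by \textsc{Iterate} encodes exactly the strategy and abstract value produced at iteration $k$ by the original pointwise algorithm of \S\ref{sec:original_strategy_iteration}. Once this iteration-by-iteration equivalence is in place, both termination within the stated bound and optimality (least inductive invariant) transfer directly from the corresponding properties already established for the original algorithm in \cite{Gawlitza_Monniaux_LMCS12}.

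For the base case, the initialization block of \textsc{Iterate} encodes $\rho(\vec{b}^0)=\mat{A}\vec{x}^0$ and $\rho(\vec{b})=-\vec{\infty}$ elsewhere using just two \textsc{Mtbdd} leaves per row, matching $\rho_0$. For the inductive step, the heart of the proof is to show that \textsc{Improve} produces the same strategy update as the original algorithm, which I would derive from Lemma~\ref{lemma:improve}: the outer SMT query $\psi$ is satisfiable precisely when some pair $(\vec{b},\vec{b}')$ witnesses a violation of inductiveness at row $i'$, and each inner iteration generalizes a single satisfying model to the set $G$ of all $\vec{b}'$ for which the chosen $(\instance{\vec{b}},\instance{\vec{p}})$ remains a valid predecessor/path. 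Growing $U$ by $G$ and re-querying $\psi\wedge\neg U$ exhausts every $\vec{b}'$ that would have been updated by the original pointwise loop, so upon exit the stored $(\sigma,\pi)$ differs from the original update only in the nondeterministic choice of witness models, which is immaterial to the soundness of strategy improvement. The equivalence of the value-computation step, $\rho_{k+1}=\mu_{\geq \rho_k}\Psi_{\pi_{k+1}}$, is deferred to \S\ref{sec:new_strategy_value}, where it reduces to observing that working with \textsc{Mtbdd} leaves merely groups identical rows of the original LP without changing its optimum.

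For termination and optimality, each non-terminating iteration sets $\mathit{stable}:=\mathit{false}$ only when a model with $\Delta>0$ has been found, so $\Psi_{\pi_{k+1}}(\rho_k)>\rho_k$ at some coordinate and the strategy is strictly improved; since there are at most $2^{(n+d)\ell 2^n}$ distinct strategies, the iteration bound stated in the theorem follows. At termination, the SMT query of \textsc{Improve} is unsatisfiable for every row, which by construction means no transition escapes $\gamma(\rho)$, so $\rho$ is an inductive invariant, and its minimality is inherited from the strategy-iteration correctness proof of \cite{Gawlitza_Monniaux_LMCS12}. The main obstacle is the equivalence argument underlying Lemma~\ref{lemma:improve}: one must carefully check that the generalization $G=F[\instance{\vec{b}_1}/\vec{b},\instance{\vec{p}_1}/\vec{p}]$ captures all and only those $\vec{b}'$ whose improvement is compatible with the chosen $(\instance{\vec{b}},\instance{\vec{p}})$, and that $U$ eventually grows to cover every $\vec{b}'$ that the original algorithm would have improved at this iteration; the rest of the theorem is then bookkeeping over that lemma and the value-computation equivalence.
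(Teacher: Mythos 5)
Your proposal follows essentially the same route as the paper's proof: correctness is reduced to the correctness of the path-focused strategy iteration of \cite{Gawlitza_Monniaux_LMCS12} combined with Lemma~\ref{lemma:improve} and the strategy-value equivalence of \S\ref{sec:new_strategy_value}, and the iteration bound is obtained by counting the possible strategies ($2^n$ choices for $\sigma$ and $2^d$ for $\pi$ at each of the $2^n \ell$ coordinates), each selected at most once. Your observation that the two algorithms may differ only in the nondeterministic choice of witness models, which is immaterial because both are instances of the same improvement scheme, is precisely the remark the paper makes.
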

\rronly{
\begin{proof}
The correctness of the result ensues from the correctness of the
path-focused strategy iteration approach
\cite{Gawlitza_Monniaux_LMCS12}, the correctness of the improvement
strategy (Lemma~\ref{lemma:improve}) and the correctness of the
strategy value computation (proved in \S\ref{sec:new_strategy_value}),
with the remark\rronly{ that our new algorithm can be considered an
  instance of the path-focused iteration scheme: the difference with
  the instance described in \S\ref{sec:original_strategy_iteration} is
  that we store $\rho$ in an efficient way and the way we pick the
  improved strategy, neither of which matters for correctness}.

Strategy iteration terminates in at most as many iterations as there
are possible strategies: here, for each of the $2^n$ states $\vec{b}'$
and $i$-th constraint ($1 \leq i \leq m$), there are $2^n$ possible
choices for $\sigma(i,\vec{b}')$ and $2^d$ choices for
$\pi(i,\vec{b}')$, thus the bound.
\end{proof}}
\pponly{\vspace*{-4ex}}

\subsection{Computing the Strategy Value with Fewer Unknowns}
\label{sec:new_strategy_value}

There remains the second difficulty: computing the value of a given
strategy, that is, computing $\rho(\vec{b})$ for $\vec{b} \in \BB^n$,
thus solving linear programs with at least $m2^n$
variables~\cite[\S6.4]{Gawlitza_Monniaux_LMCS12}.  We solve this
difficulty by remarking that $\rho(i,\vec{b})$ is the same for all
$\vec{b}$ in the same equivalence class with respect to $\sim_i$:
$\vec{b}_1 \sim_i \vec{b}_2 \iff \pi(i,\vec{b}_1) = \pi(i,\vec{b}_2)
\land \sigma(i,\vec{b}_1) = \sigma(i,\vec{b}_2)$.  Assuming $\vec{b}
\mapsto \sigma(i,\vec{b})$ and $\vec{b} \mapsto \pi(i,\vec{b})$ are
stored as \textsc{MtBdds}\rronly{ (or, equivalently, $n$ ordinary
  \textsc{Bdd}s for $\vec{b} \mapsto \sigma(i,\vec{b})$ and $d$ for
  $\vec{b} \mapsto \pi(i,\vec{b})$, each containing a bit of the
  image)}, the equivalence classes are obtained as \textsc{Bdd}s using
the reverse images of these functions.

We then apply the algorithm from \S\ref{sec:original_value}, but
instead of the whole set of $\rho(i,\vec{b})$ unknowns for $\vec{b}
\in \BB^n$ and $1 \sleq i \sleq m$, we only pick one unknown $c_{i,j}$
per equivalence class; these unknowns define $\rho$ in the form
expected by the strategy improvement step of
\S\ref{sec:new_strategy_improvement}.
Remark that, if the equivalence classes are computed as \textsc{Bdd}s,
it is trivial to turn them into logical formulas $\phi_{i,j}$ of
linear size w.r.t. that of the \textsc{Bdd}.  Notice that also the
$\infty$-abstraction technique
\cite[\S8,9]{Gawlitza:2011:SSR:1961204.1961207} also applies.
Let $\eqvclass{\vec{b}}{i}$ denote the equivalence class of $\vec{b}$
with respect to $\sim_i$; $\pi$ directly maps from equivalence classes
as $\pi(i,\eqvclass{\vec{b}}{i}) \defn \pi(i,\vec{b})$
(resp. for~$\sigma$).

\begin{example*}
Let us assume the current abstract value \\[0.6ex]
\centerline{$
\begin{array}{lll}
\rho(1)=\{\neg\verr \rightarrow \frac{365}{16}, &&\verr\rightarrow -\infty\}\\
\rho(2)=\{\neg\verr\wedge\vheat \rightarrow -16, &\neg\verr\wedge \neg\vheat \rightarrow -22, &\verr\rightarrow -\infty\}.
\end{array}
$}
Moreover, assume that we have computed the following strategy for the
first template row:
$\sigma(1,\bar{\verr}\bar{\vheat}\vfan)=\sigma(1,\bar{\verr}\bar{\vheat}\bar{\vfan})\in\{\bar{\verr}\bar{\vheat}\vfan,\bar{\verr}\bar{\vheat}\bar{\vfan}\}$ and
$\pi(1,\bar{\verr}\bar{\vheat}\vfan)=\pi(1,\bar{\verr}\bar{\vheat}\bar{\vfan})=(1,0,0)$. 
Then the states $\bar{\verr}\bar{\vheat}\vfan$ and $\bar{\verr}\bar{\vheat}\bar{\vfan}$ will
be in the same equivalence class, because both bounds will have the same
value in the strategy fixed point. Hence, we have to generate only one
set of constraints for both states when solving the LP problem that
characterizes the strategy fixed point $\rho$. 

\noindent We finally obtain\footnote{By maximizing $-t$ for any $te$
  in $t\geq 18 \wedge 14\leq te\leq 19 \wedge t' = \frac{15t+te}{16}$, 
we get $-\frac{15\cdot 18+14}{16}=-\frac{71}{4}$.} 
$
\left\{\begin{array}{lll}
\rho(1)=\{\neg\verr \rightarrow \frac{365}{16}, &&\verr\rightarrow -\infty\}\\
\rho(2)=\{\neg\verr\wedge\vheat \rightarrow -16, &\neg\verr\wedge \neg\vheat \rightarrow -\frac{71}{4}, &\verr\rightarrow -\infty\}.
\end{array}\right.
$
This is actually the strongest inductive abstract
invariant of our program: $\neg\verr\wedge\vheat \wedge 16\leq t\leq
\frac{365}{16} \vee \neg\verr\wedge\neg \vheat \wedge \frac{71}{4}\leq t\leq
\frac{365}{16}$.
\end{example*}

\begin{theorem}
  Let $\abstr{\rho}$ be the result of the modified strategy evaluation
  and $\rho_{k+1} = \mu_{\geq \rho_k} \Psi_{\pi_{k+1}}$ be the result
  of the original strategy evaluation. Then for all $i,\vec{b}$,
  $\rho(i,\vec{b}) = \abstr{\rho}(i,\eqvclass{\vec{b}}{i})$.
\end{theorem}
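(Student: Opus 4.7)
My plan centers on the observation that $\rho_{k+1}$ itself must respect every equivalence relation $\sim_{i'}$. Once this is established, the modified linear program becomes the original one with equivalent unknowns merged and duplicate blocks removed, with matching optima.

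First, I would show that the one-step operator respects the quotient: for any $\rho$, any template row $i'$, and any $\vec{b}_1 \sim_{i'} \vec{b}_2$, one has $\Psi_{\pi_{k+1}}(\rho)(i',\vec{b}_1) = \Psi_{\pi_{k+1}}(\rho)(i',\vec{b}_2)$. By inspection of Equation~(\ref{equ:transformer}), the right-hand side depends on $\vec{b}'$ only through $\pi_{k+1}(i',\vec{b}')$, which selects the transition conjunction $T_{\pi_{k+1}(i',\vec{b}')}$, and through $\sigma_{k+1}(i',\vec{b}')$, which provides the predecessor bounds $\rho(\sigma_{k+1}(i',\vec{b}'))$; both agree on $\vec{b}_1$ and $\vec{b}_2$ by the very definition of $\sim_{i'}$. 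Specialising to $\rho = \rho_{k+1}$ and using the fixed-point equation $\rho_{k+1} = \Psi_{\pi_{k+1}}(\rho_{k+1})$ then yields $\rho_{k+1}(i',\vec{b}_1) = \rho_{k+1}(i',\vec{b}_2)$, so $\rho_{k+1}(i',\vec{b})$ is already a function of $\eqvclass{\vec{b}}{i'}$ alone.

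Next, I would compare the two linear programs. For every reachable $\vec{b}'$ and every $i'$, the program of \S\ref{sec:original_value} emits one block over fresh auxiliary variables, linking the unknowns $v_{j,\sigma_{k+1}(i',\vec{b}')}$ (for $1 \leq j \leq \ell$) with $v_{i',\vec{b}'}$ via the conjunction $T_{\pi_{k+1}(i',\vec{b}')}$. For $\vec{b}'_1 \sim_{i'} \vec{b}'_2$ the two blocks are identical up to renaming of their fresh variables, and the preceding paragraph guarantees $v_{i',\vec{b}'_1}^{*} = v_{i',\vec{b}'_2}^{*}$ at the optimum. The modified LP of \S\ref{sec:new_strategy_value} is exactly the original one after identifying equal unknowns and retaining a single block per class. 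Optimum equality follows in one direction by lifting a feasible modified solution via duplication of fresh auxiliaries, and in the other direction by projecting the original's symmetric optimum onto the quotient. The same symmetry argument carries over to the preliminary $\infty$-abstraction phase, since equivalent unknowns are simultaneously finite or $+\infty$. Combining the two phases yields $\abstr{\rho}(i',\eqvclass{\vec{b}}{i'}) = \rho_{k+1}(i',\vec{b})$ for every $i'$ and $\vec{b}$.

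The main obstacle is the last step: making precise that collapsing equivalent unknowns and dropping redundant blocks preserves both feasibility and the optimal value. The fixed-point compatibility proved first is exactly what makes this merging lossless; without it, the quotient would be a restriction (identifying variables) composed with a relaxation (dropping blocks), which need not preserve the optimum in either direction. One also has to check that the two-phase treatment with $\infty$-abstraction interacts correctly with the quotient, but this reduces to the same symmetry observation applied to the auxiliary LP.
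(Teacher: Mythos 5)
Your proof is correct, and it rests on the same key observation as the paper's --- that by inspection of Equation~(\ref{equ:transformer}) the value $\Psi_{\pi_{k+1}}(\rho)(i',\vec{b}')$ depends on $\vec{b}'$ only through $\sigma_{k+1}(i',\vec{b}')$ and $\pi_{k+1}(i',\vec{b}')$, hence is constant on $\sim_{i'}$-classes --- but you complete the argument by a different route. The paper propagates this observation through the Kleene iterates $u_0=\rho_k$, $u_{j+1}=\Psi_{\pi_{k+1}}(u_j)$: every $u_j$ with $j\geq 1$ is constant on classes, so the quotient operator $\abstr{\Psi}_{\pi_{k+1}}$ started from $u_1$ produces the same iterates and hence the same limit; the correctness of the LP encoding is then invoked only for the quotient system. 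You instead apply the observation once, to the fixed point itself via $\rho_{k+1}=\Psi_{\pi_{k+1}}(\rho_{k+1})$, and then argue directly at the level of the linear programs of \S\ref{sec:original_value}, showing that the quotient LP is the original LP with symmetric unknowns identified and duplicate blocks removed, and that the lifting/projection of feasible points preserves the optimum in both directions (including through the $\infty$-abstraction phase). Your route requires the extra LP-equivalence argument and leans on the correctness of the \emph{original} LP encoding (to know that its optimum equals the fixed point and is therefore symmetric, which is what makes the projection direction work), whereas the paper's route delegates all LP reasoning to the cited prior work applied to the quotient system; in exchange, yours makes explicit \emph{why} merging unknowns is lossless rather than leaving it implicit in the phrase ``this corresponds to iterating $\abstr{\Psi}_\pi$''. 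Both arguments silently rely on the fact that $\rho_k$ itself need not be constant on the new classes --- the paper handles this by starting the quotient iteration at $u_1$ rather than $u_0$, and your LP comparison sidesteps it since $\rho_k$ enters the LP only through the selection of which $(i',\vec{b}')$ receive constraints.
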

\rronly{
\begin{proof}
  The original strategy evaluation computes $\rho_{k+1} = \mu_{\geq \rho_k} \Psi_{\pi_{k+1}}$. Remark that $\rho_{k+1}$ is thus the limit of the ascending sequence $u_0 = \rho_k$, $u_{j+1} = \Psi_{\pi_{k+1}} (u_j)$; furthermore, from the definition of $\Psi_{\pi_{k+1}}$ and the form of the equivalence classes, for any $j \geq 1$, $u_j(i,\vec{b})$ does not depend on the choice of $\vec{b}$ in an equivalence class of~$\sim_i$. It follows that the same limit is obtained by keeping for each $j$ only one $u_j(i,\vec{b})$ per equivalence class. This corresponds to iterating\\[-2ex]
\begin{equation}
\hspace*{-1em}\abstr{\Psi}_{\pi}(\rho) \defn (i',\eqvclass{\vec{b}'}{i'}) \mapsto
  \sup \left\{ \mat{A}_i \vec{x}' \mid \exists \vec{x},\vec{y}.~ T_{\pi(i',\vec{b}')}
          \land (\mat{A}\vec{x} \leq \rho(i',\sigma(\eqvclass{\vec{b}'}{i'}))) \right\}
\end{equation}

As in \S\ref{sec:original_value}, the modified strategy evaluation
computes the least fixed point $\abstr{\rho}$ of
$\abstr{\Psi}_{\pi_{k+1}}$ greater than $(i,\eqvclass{\vec{b}}{i})
\mapsto u_1(i,\vec{b})$. But, from the remark above, this implies that
for all $(i,\vec{b})$, $\rho_{k+1}(i,\vec{b}) =
\abstr{\rho}(i,\eqvclass{\vec{b}}{i})$.
\end{proof}}

\subsection{Abstraction Through Limitation of Partitioning}
Even though we have taken precautions against unnecessarily large
numbers of unknowns by grouping ``equivalent'' Boolean states
together, it is still possible that the number of equivalence classes
to consider grows too much as the algorithm proceeds. 
It is however possible to freeze them
permanently\rronly{, for instance} to their last sufficiently small value.
Only small modifications to the algorithms are necessary: The strategy
value computation (\S\ref{sec:new_strategy_value}) remains the same
except that the equivalence classes are never recomputed. Let
$\phi_{i,1},\dots,\phi_{i,s_i}$ denote the propositional formulas (in
$\vec{b}$) defining the equivalence classes with respect to constraint
number~$i$.  In the strategy improvement step
(\S\ref{sec:new_strategy_improvement}) $\sigma(i,j) \in \BB^n$
(resp. $\pi(i,j)$) is now defined for the index $1 \sleq j \sleq s_i$ of
an equivalence class with respect to constraint~$i$.

\rronly{ 
The correctness proofs stay the same, except that instead of
  computing the least fixed point in $\left(\{ 1, \dots, \ell \}
    \times \BB^n \right)\rightarrow \overline{\QQ}$ we compute it in
  $\left( \bigsqcup_{1, \dots, \ell} E_i \right) \rightarrow
  \overline{\QQ}$ where $E_i$ is the set of equivalence classes
  associated with constraint~$i$; the latter lattice is included in
  the former.  
}

\rronly{
\subsection{Combination with Predicate Abstraction}
We have described so far a method for computing template polyhedral invariants on each element of a partition of the state space according to the value of Booleans $b_1,\dots,b_n$. These Booleans may be replaced by arbitrary predicates $\chi_1,\dots,\chi_n$: it suffices to replace $T$ by $T \land \bigwedge_i (b_i \Leftrightarrow \chi_i)$.

\subsection{Strategy Iteration with Partitioning is EXPTIME-hard}

In preceding work without partitioning
\cite{GM11,Gawlitza_Monniaux_LMCS12}, the single-exponential upper
bound was shown to be reached by a contrived example program, and the
decision problem associated with the least invariant computation
(``given a template, a transition relation, an initial state and a bad
state, is there an inductive invariant that excludes the bad state'')
was shown to be $\Sigma^p_2$-complete.  We have an \textsc{nexptime}
upper bound on the problem.  We will now prove
\textsc{exptime}-hardness for the problem with partitioning.

Let $\Pi$ be an \textsc{exptime} problem.  Consider a Turing machine $\calM$
deciding $\Pi$, with a single tape over the alphabet $\{0,1\}$ with
time bounded by $2^{P(n)}$ and finite state in~$\Sigma$.  Let $x$ be
an input of size $n$ to $\calM$; we are going to describe a program
$\calP$ of length proportional to $P(n)$ such that its execution would
yield the same result as running $\calM$ over~$x$. $\calP$ only uses
Boolean operations for discrete state, and affine linear operations
for continuous state.

Let $\calM_x$ be the Turing machine $\calM$ where $x$ has been
substituted into the input; $|\calM_x| \simeq |\calM| + |x|$.  The
tape is modeled as a couple of natural integers $(l,r)$ with
$2^{P(n)}$ bits, where
\begin{itemize}
\item $l$ represents the bits strictly to the left of the read-write
  head, the $2^{P(n)}-1$-th bit representing the bit on the tape just
  left of the read-write head, and the least order bit representing
  the bit on the tape $2^{P(n)}$ positions left of the head;
\item $r$ represents the bits to the right of the read-write head, the
  $2^{P(n)}-1$-th bit representing the bit on the tape under the
  read-write head, and the least order bit representing the bit on the
  tape $2^{P(n)}-1$ positions right of the head.
\end{itemize}
$l$ and $r$ are initialized to $0$ (empty tape).

A step of the Turing machine $\calM_x$ is simulated as follows:
\begin{itemize}
\item the bit $b$ under the read-write head is obtained by taking the
  $2^{P(n)}-1$-th bit of $r$, by comparing $r$ to $K =
  2^{2^{P(n)}-1}$;
\item the bit $w$ just to the left of the read-write head is similarly
  obtained by comparing $l$ to~$K$;
\item the bit $b'$ to be written to the tape under the head, the
  direction of movement of the head and the next state are computed
  according to the rules of~$\calM$;
\item if the tape is to be moved to the left, then a parallel update
  is made: $l:=2(l-wK)$ and $r:=r/2+b'K$
\item if the tape is to be moved to the right, then a parallel update
  is made: $r:=2(l-bK)$ and $l:=l/2+b'K$
\end{itemize}

Since there are $2^{P(n)}$ steps to be simulated, we loop over the
simulated step using a binary counter $S$ with $P(n)$~bits. This loop
ends when the Turing machine under simulation enters a final state.

Note that, in the program, $l$ and $r$ are always natural
numbers. This is because, when we execute $l/2$ (resp. $r/2$), the
low-order bit of $l$ (resp.~$r$) is necessarily $0$, otherwise it
would mean that $\calM$ would be using more than $2^{P(n)}$ bits of
tape. Also, the operations $wK$, $bK$, $b'K$ are defined not by
non-linear multiplications, but by case analysis over the bits $w$,
$b$ and $b'$. All resulting elementary operations are thus linear over
the reals.

The last issue to solve is how to create $K$. We cannot write it as a
constant in the program, because it has $2^{P(n)}$ bits --- the
program would have exponential size. Instead, we prepend to the
program $K:=1$ followed by a sequence of $2^{P(n)}-1$ doublings ($K :=
2K$), implemented by a loop over binary counter $S$ of length $P(n)$.

We thus obtain a program of length $O(P(n)+n)$ (because of the
operations over binary counter $S$ of length $P(n)$). It has $P(n)+
2\lceil \log_2 |\Sigma|\rceil + 3$ bits of discrete storage (not
counting control flow: $3$ for $b,b',w$, $P(n)$ for the binary counter
$S$, $2 \log_2 |\Sigma|\rceil$ for implementing the state transition).

Now note the execution of $\calP$ is fully deterministic. In addition,
along an execution trace $(p,S)$, where $p$ is the control point in
$\calP$ and $S$ the binary counter, takes distinct values ($S$ is
incremented once per loop iteration and $p$ follows control inside the
loop).  Thus, an interval analysis that has a different interval for
$l$ and $r$ for each value of $(p,S)$ will essentially simulate the
concrete execution of~$\calP$ and obtain an \emph{exact} result.  Such
an interval analysis can thus decide whether $\calP$ terminates in
``accepting'' or ``rejecting'' answers, and thus whether $\calM$
accepts or rejects~$x$.


Despite repeated attempts, we have not yet been able to narrow the
interval at proving \textsc{nexptime}-completeness. It is thus
possible that worst-case complexity is actually better.  
}

\section{Experiments}

We have prototypically implemented the algorithm in the static
analyzer \textsc{ReaVer} \cite{Sch12} (\rronly{written in OCaml and
}taking \textsc{Lustre} code as input) using the LP solver
\textsc{QSOpt\_Ex}\footnote{version 2.5.6,
  \url{http://www.dii.uchile.cl/~daespino/ESolver_doc/main.html}}, the
SMT solver \textsc{Yices}\footnote{version 1.0.40,
  \url{http://yices.csl.sri.com/}} and the BDD package
\textsc{Cudd}\footnote{version 2.4.2,
  \url{http://vlsi.colorado.edu/~fabio/CUDD/}}.  The implementation
makes heavy use of incremental SMT solving.

\paragraph{Tested variants of the algorithm.}
We implemented the following variants of the algorithm to compare
their performance:
\begin{compactitem} 
\item[(n)] Naive model enumeration using SMT solving per template row
  as explained in the first part of \S\ref{sec:new_strategy_improvement}. This
  corresponds to updating $\pi$ and $\sigma$ in
  Alg.~\ref{algo:improve} using the model obtained in line 5
  ($G=(\vec{b}'=\instance{\vec{b}}')$) without doing lines 6 to 11 and 15.
\item[(t)] Enhancement of (n) by trying to reapply successfully improving models to other template rows.
\item[(s)] Symbolic encoding of template rows and model enumeration
  over the whole template at once, \ie, lines line 2 and 17 are omitted
  because the template row $i'$ becomes part of the SMT formula to be
  solved for in line 4, and is then retrieved from the model returned in line 5.
\item[(g)] Alg.~\ref{algo:improve} with \emph{generalization} as
  described in \S\ref{sec:new_strategy_improvement}, but without the
  inner iterations (\ie, without lines 7 to 9 and 15) that search for
  models of the purely propositional formula $F$.
  Hence, (g) obtains the models to be generalized from the SMT formula
  in line 4 only.
\item[(m)] Alg.~\ref{algo:improve} as given.
\end{compactitem}
All these variants reduce the number of unknowns in the LP problem
using equivalence classes (see \S\ref{sec:new_strategy_value}).
Furthermore, we used an implementation of the original max-strategy
algorithm \cite{GS07} (GS07), and the improvements using
SMT solving proposed in \cite{GM11} (GM11).  Note that these latter
two algorithms need to enumerate $\mathcal{O}(2^n)$ control states
(where $n$ is the number of Booleans in the recurrent state).
Yet, the size of the control flow graphs (see Table~\ref{tab:exp2})
generated using the method described in \cite[\S7.3]{Sch12}
is often far smaller than the worst case.
The difference between GS07 and GM11 is essentially that, for each
template row, the former tests all strategies to find an improvement,
whereas the latter asks the SMT solver to find an improving strategy
in the disjunction of available strategies.

It is important to note that all these variants of the algorithm
return the \emph{same} invariants, \ie~the strongest invariants in the
domain $\BB^n \rightarrow A$ where $A$ is a given template abstract
domain. The only difference is the way the strategy improvement is
computed.

\paragraph{Comparisons.}
We performed two kinds of comparisons:\footnote{The examples and
  detailed experimental results can be found on
  \url{http://www.cs.ox.ac.uk/people/peter.schrammel/reaver/maxstrat/}.} 

\begin{compactenum} 
\item We evaluated the scalability of various variants of the
  max-strategy improvement algorithm on small benchmarks (1-, 2-, and
  3-dimensional array traversals, parametric in size by duplicating
  functionality and adding Boolean variables) that exhibit the
  strategy and state space explosion expected to occur in larger
  benchmarks. We used box and octagonal templates, giving a total of
  96 benchmarks. 
\item We compared the max-strategy improvement algorithm with standard
  forward analysis with widening and with abstract acceleration
  \cite{GH06,SJ12c} (both using widening after two iterations and
  applying two descending iterations\footnote{We did not not observe
    any improvement in precision beyond these values.}) on reactive
  system models (traffic lights \cite{BKA08}, our thermostat, car
  window controller \cite{SMK13a}, and drug pump \cite{SHL11}), again
  deriving the more complex variants 2 and 3 by adding and duplicating
  functionality (e.g. branching \emph{multiple} drug pumps to a
  patient and checking the concentration in the blood). 
\end{compactenum}

\paragraph{Results.}
The first comparison (see Fig.~\ref{fig:exp1}) shows that
the various variants of the algorithm behave quite differently in
terms of runtime:
The GM11 improvement is on average 22\% faster than the original
algorithm.
(t) and (s) scale better than (n). It is interesting to observe that
(t) and (s) perform similarly although their algorithms are very
different.
The most important optimization of the strategy improvement algorithm
proposed in this paper is the generalization step which makes it scale
several orders of magnitude better than the other variants,
because it avoids naive model enumeration.
The results indicate that the full Alg.~\ref{algo:improve}
(variant (m)) is slower than the variant (g) without the innermost
iterations. A possible explanation for this is that as soon as all
models have been enumerated, (m) has to confirm 
unsatisfiability by checking both $F\wedge\neg G$ and $\psi\wedge\neg U$.

\rronly{
However, a broader evaluation is necessary to come to definite
conclusions since the structure of the benchmarks is quite simple.
}

\begin{figure}[t]
\centering
\vspace*{-4ex}
\includegraphics{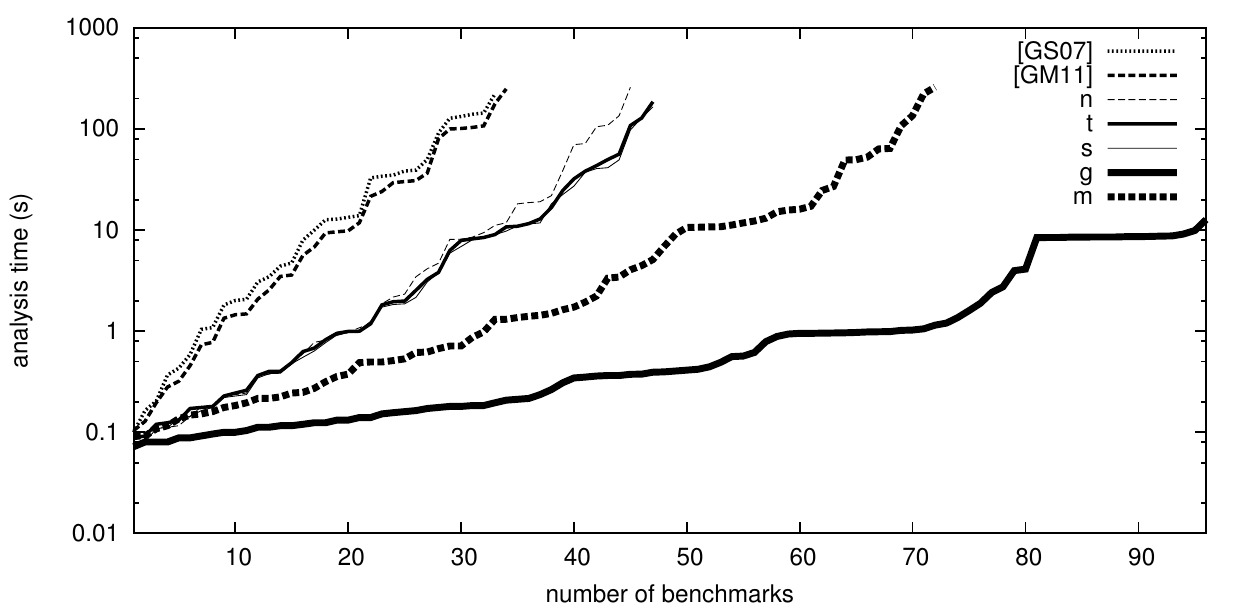}
\vspace*{-5ex}
\caption{\label{fig:exp1}
Comparison of various variants of the max-strategy improvement
algorithm. All these algorithms compute the \emph{same} invariant. 
}
\end{figure}

The results of the second comparison (see Table~\ref{tab:exp2})
indicate that max-strategy iteration is able to compute
better invariants than techniques relying on widening in the \emph{same}
$\BB^n \rightarrow A$ abstract domain.
Enhanced widening techniques, such as abstract acceleration, do
occasionally improve on precision, but without guarantee to find
the best invariant. 

We emphasize again that all four max-strategy iteration algorithms in
Table~\ref{tab:exp2} compute identical invariants.

An open problem w.r.t. all template-based analysis techniques is
however the generation of good templates.  For our experiments, we
have chosen the weakest of the standard templates (boxes, zones,
octagons) that can express the required invariant.
Strategy iteration is in general the more expensive technique, but due
to our improvements the performance is pushing forward into a
reasonable range.  These results also show that variant (g) --
although a bit slower than (s) in many cases -- seems to scale best.

\begin{table}[t]
  \scriptsize
\centering
\vspace*{-3ex}
\hspace*{-3em}
\begin{tabular}{|l|c|rrrr|rr|r@{}lc|r@{}lc|r@{}lc|r@{}lc|r@{}lc|r@{}lc|r@{}lc|}
\hline
& &\multicolumn{6}{|c|}{size} & \multicolumn{6}{|c|}{previous algorithms} & \multicolumn{6}{|c|}{this paper} & \multicolumn{3}{|c|}{std.} & \multicolumn{3}{|c|}{abstr.} \\
& &\multicolumn{4}{|c|}{vars}&\multicolumn{2}{|c|}{CFG} & \multicolumn{3}{|c|}{GS07\cite{GS07}} & \multicolumn{3}{|c|}{GM11\cite{GM11}} & \multicolumn{3}{|c|}{g} & \multicolumn{3}{|c|}{s} & \multicolumn{3}{|c|}{analysis} & \multicolumn{3}{|c|}{accel.} \\
&dom&b&n&bi&ni&lc&ed & \multicolumn{2}{|c|}{time} & p & \multicolumn{2}{|c|}{time} & p &\multicolumn{2}{|c|}{time} & p &\multicolumn{2}{|c|}{time} & p &\multicolumn{2}{|c|}{time} & p &\multicolumn{2}{|c|}{time} & p \\
\hline
Traffic 1 &B&6&6&0&0&18&61& 2&.16 &$\checkmark$& 2&.10 &$\checkmark$& 
2&.33 &$\checkmark$& 2&.16 &$\checkmark$& 1&.22 &$\checkmark$&\bf 0&\bf.43 &$\checkmark$\\
Traffic 2 &Z&6&8&0&0&18&151& 122& &$\checkmark$& 114& &$\checkmark$& 
108& &$\checkmark$&\bf 97&\bf.0 &$\checkmark$& 3&.49 && 2&.86* &\\
Traffic 3 &Z&8&8&1&0&50&619& 674& &$\checkmark$& 640& &$\checkmark$& 
357& &$\checkmark$&\bf 329& &$\checkmark$& 22&.1 && 19&.2* &\\
Thermostat 1&B&4&3&0&2&6&15& 0&.36 &$\checkmark$& 0&.32 &$\checkmark$& 
0&.28 &$\checkmark$&\bf 0&\bf.26 &$\checkmark$& 0&.82 && 0&.85 &\\
Thermostat 2&B&6&5&0&4&18&145& 16&.8 &$\checkmark$& 15&.1 &$\checkmark$& 
3&.44 &$\checkmark$&\bf 3&\bf.23 &$\checkmark$& 26&.6 && 30&.4 &\\
Thermostat 3&B&8&7&0&6&66&1357& 720& &$\checkmark$& 715& &$\checkmark$& 
66&.5 &$\checkmark$&\bf 61&\bf.9 &$\checkmark$& 674& && 908& &\\
Window 1&O&9&5&5&0&21&120&109& &$\checkmark$& 102& &$\checkmark$& 
\bf 70&\bf.7 &$\checkmark$& 73&.4 &$\checkmark$& 4&.57 && 4&.70 &\\
Window 2&O&11&5&6&0&45&452& 394& &$\checkmark$& 372& &$\checkmark$& 
\bf 189& &$\checkmark$& 286& &$\checkmark$& 18&.57 && 23&.5 &\\
Window 3&O&13&5&7&0&81&1388& 1412& &$\checkmark$& 1220& &$\checkmark$& 
\bf 242& &$\checkmark$& 697& &$\checkmark$& 70&.2 && 93&.5 &\\
DrugPump 1 &B&4&10&4&1&6&231& 92&.6 &$\checkmark$& 90&.3 &$\checkmark$& 
6& .05 &$\checkmark$&\bf 4&\bf.55 &$\checkmark$& 210& && 120& &\\
DrugPump 2 &B&7&12&8&1&34&11201& \multicolumn{6}{|c|}{timeout $>1800$} & 
149& &$\checkmark$& \bf 95&\bf.5 &$\checkmark$&
\multicolumn{6}{|c|}{timeout $>1800$}\\
DrugPump 3 &B&10&14&8&1&146&112561& \multicolumn{6}{|c|}{timeout $>1800$} & 
\bf 1019& &$\checkmark$& 1396& & $\checkmark$& \multicolumn{6}{|c|}{timeout $>1800$}\\
\hline
\end{tabular}~\\[2ex]
 \caption{\label{tab:exp2}
Comparison of max-strategy iteration with standard analysis approaches
(dom: domain used (boxes (B), zones (Z), octagons (O)); number of
variables: Boolean (b), numerical (n), Boolean and numerical inputs
(bi, ni); number of locations (lc) and edges (ed) of the control flow
graph (CFG); analysis time in seconds; property proved (p); fastest in
bold). (* computed with octagons, because zones are not available)
}
\end{table}

\pponly{\vspace*{-2ex}}
\section{Related Work}
\pponly{\vspace*{-1ex}}
It has long been recognized that it is a good idea to distinguish
states according to Boolean variables or arbitrary predicates
(as in \emph{predicate abstraction}). Yet, taking all Boolean
variables into account tends to be unbearably expensive\rronly{: checking the
reachability of a state for a purely Boolean program is
\textsc{pspace}-complete, in practice often solved by constructing a
\textsc{Bdd} describing reachable states as the result of a least
fixed point computation, which may have exponential size}.
\rronly{

While it is possible to encode finite-precision arithmetic into a
Boolean program, the large number of Boolean variables and the
complicated transition structure generally result in poor performance,
thus the incentive to separate arithmetic from ``true'' Booleans and
other small enumerated types.  Note that this may not be so obvious
for languages such as C or intermediate representations such as LLVM
bitcode, where such types may be encoded as integers; a pre-analysis
may be necessary~\cite{DBLP:journals/entcs/JeannetS12}.

Even if the number $n$ of Boolean variables has been suitably
reduced, distinguishing all combinations may be too costly.} Various
heuristics have therefore been proposed so as to partition $\BB^n$
into a reasonably small number of subsets
\cite{DBLP:conf/sas/SchrammelJ11}. Relations between the Boolean and
numerical states are only kept w.r.t. these equivalence
classes~\cite{DBLP:conf/vmcai/SchrammelS13}.
Combining the latter technique with the method presented in this paper
to limit partitioning would certainly improve efficiency, however, to
the detriment of precision of the obtained invariant which strongly
depends on the choice of a clever partitioning heuristics.

Early work in compilation and verification of reactive systems
\cite{DBLP:journals/scp/BouajjaniFHR92} advocated quotienting the
Boolean state space according to some form of \emph{concrete}
bisimulation. In contrast, we compute coarser equivalences according
to per-constraint \emph{abstract} semantics.
In the \rronly{industrial-strength analyzer }\textsc{Astr\'ee}\pponly{
  analyzer}, static heuristics determine reasonably small packs of
``related'' Booleans and numerical variables, such that the values of
the numerical variables are analyzed separately for each Boolean
valuation~\cite[\S6.2.4]{BlanchetCousotEtAl_PLDI03}. In contrast, our
equivalence classes are computed dynamically and per-constraint.

\rronly{
\emph{Disjunctive invariants} are related to the partitioning
approach; in both cases the invariant is a disjunction $C_1 \lor \dots
\lor C_d$ where $C_i$ are simpler invariants (typically, conjunctions
of certain types of literals), but in the disjunctive invariant
approach the $C_i$ may overlap (that is, not have pairwise empty
intersection). In such a system, union (as at control merge points)
may be implemented by simple concatenation of the disjunctions, but
this quickly leads to a blowup; instead a criterion could be used to
merge those $C_i$, \eg, of which the abstract union is actually exact;
a similar problem occurs with widening
operators~\cite{DBLP:journals/sttt/BagnaraHZ06}. An alternative, which
bears some limited resemblance to our strategy-based approach, is to
build a map $\sigma$ meaning that disjunct $C_i$ flows through path
$\pi$ into disjunct
$C_{\sigma(i,\pi)}$~\cite{Henry_Monniaux_Moy_SAS2012}.
}

The strategy iteration we have applied proceeds ``upward'', by
successive under-approximations of the least inductive invariant%
\rronly{ inside the domain converging to it in a finite number of
  iterations}; strategies correspond to \rronly{paths inside the
  program, which map }to ``max'' operators in a high-level vision of
the problem. There also exists ``downward'' strategy iteration, where
strategies correspond to ``min'' operators\rronly{ (tests inside the
  programs and internal reductions of the abstract domain)}:
iterations produce successive \emph{over}-approximations of the least
inductive
invariant~\cite{DBLP:conf/esop/GaubertGTZ07,DBLP:conf/atva/SotinJVG11},
to which convergence is ensured in some cases. A bonus of such an
approach is that each iteration produces an over-approximation of the
least inductive invariant\rronly{ inside the domain}, which may be
used to prove safety properties without having to wait for
convergence. Sadly, it does not seem to be easily adapted to
approaches based on SMT solving, since the SMT formulas would contain
universal quantifiers\rronly{, which greatly complicates their solving}.

\pponly{For a more comprehensive discussion of related work, we refer to 
the extended version \cite{MS14a}.}
\rronly{
Recently, a tool for \emph{optimization modulo theory}
was presented \cite{LAK+14}. We plan to test the
variant of our algorithm maximizing $\Delta$ (see
\S\ref{sec:new_strategy_improvement}) with the help of this tool.
}

\pponly{\vspace*{-1.5ex}}
\section{Conclusion}
\pponly{\vspace*{-0.5ex}}
We propose a method for computing strongest invariants in
linear template domains when the control states are
partitioned according to $n$ Booleans or arbitrary predicates, thereby
producing a combination of predicate abstraction and template
polyhedral abstraction.  \rronly{In accordance with preceding works
\cite{Henry_Monniaux_Moy_SAS2012,Gawlitza_Monniaux_LMCS12,Monniaux_Gonnord_SAS11},
it traverses loop-free parts of the control graph without need for
intermediate abstraction, thus improving the precision.} Our method
performs strategy iteration, and dynamically partitions 
the states according to an equivalence relation depending on the
current abstraction at each step. The final result is optimal in
the sense that it is the strongest invariant in the abstract domain{\rronly,
which a naive algorithm would obtain in at least exponential time and
space}. \rronly{While an upper bound on the number of equivalence classes in
our algorithm is also exponential $n$, it can be limited arbitrarily,
with some loss of precision. The upper bound on the number of
iterations is doubly exponential in~$n$.}
Our experimental results demonstrate the significant
performance impact of various optimizations and
the ability to compute more precise invariants in comparison to
widening-based techniques.

\pponly{
In preceding work without partitioning
\cite{GM11,Gawlitza_Monniaux_LMCS12}, the single-exponential upper
bound was shown to be reached by a contrived example
program, and the decision problem associated with the least invariant
computation (``given a template, a transition relation, an initial
state and a bad state, is there an inductive invariant that excludes
the bad state'') was shown to be $\Sigma^p_2$-complete. In contrast,
although we have an \textsc{nexptime}
upper bound and proved \textsc{exptime}-hardness
(see \cite{MS14a} for a proof)
for the problem with partitioning,
we have not yet been able to \rronly{narrow the
interval despite repeated attempts at proving}\pponly{prove}
\textsc{nexptime}-completeness\rronly{. It is thus possible
  that }\pponly{---thus} the worst-case
complexity \pponly{could}\rronly{is} possibly \pponly{be }better.
}

\bibliographystyle{splncs03}
\bibliography{powerset_policies}
\end{document}